\newcommand{\hr}{{\mathcal H}}
\newcommand{\cs}{{\mathcal S}}
\newcommand{\kr}{{\mathcal K}}
\newcommand{\cc}{{\mathbb C}}
\newcommand{\M}{{\mathcal M}}
\newcommand{\nn}{{\mathbb N}}
\newcommand{\eps}{{\varepsilon}}        %%%%%%%%%%%%%%%%%%%%%%%%%%%%%%%%%%%
\newcommand{\cP}{\mathcal P}
\newcommand{\bX}{\mathbf X}
\newcommand{\bY}{\mathbf Y}
\newcommand{\bA}{\mathbf A}
\newcommand{\bB}{\mathbf B}
\newcommand{\bC}{\mathbf C}
\newcommand{\eins}{{\mathbbm{1}}}
\newcommand{\bbmY}{\mathbbm Y}
\newtheorem{theorem}{Theorem}
\newtheorem{conjecture}{Conjecture}
\newtheorem{lemma}{Lemma}
\newtheorem{remark}{Remark}
\newcommand{\tr}{\mathrm{tr}}
\newcommand{\supp}{\mathrm{supp}}
\DeclareMathOperator{\sgn}{sgn}
\DeclareMathOperator{\linspan}{span}
\begin{document}
\title{A solution to two party typicality using representation theory of the symmetric group} \author{Janis N\"otzel$^{1,2}$ \\ \scriptsize{Electronic address: janis.noetzel@tum.de, janis.notzel@uab.cat}
\vspace{0.2cm}\\
{\footnotesize $^{1}$ Theoretische Informationstechnik, Technische Universit\"at M\"unchen,}\\
{\footnotesize 80290 M\"unchen, Germany}
\\\\
\footnotesize{$^{2}$F\'{\i}sica Te\`{o}rica: Informaci\'{o} i Fen\`{o}mens Qu\`{a}ntics, Universitat Aut\`{o}noma de Barcelona,}\\
\footnotesize{ES-08193 Bellaterra (Barcelona), Spain}
}

\maketitle

\begin{section}{Abstract}
We give a proof of the multi-party typicality conjecture for the first nontrivial case when there are only two parties. The conjecture itself is motivated by the study of multi-party state merging protocols on quantum systems. Our approach is based on fundamental group-theoretical properties, thereby providing an opportunity to study the problem from a more systematic perspective. Our proof also covers an extended multiparty typicality conjecture that we state in this work. This extended multiparty typicality conjecture is formulated using arbitrary $k$-norms instead of only the $2$-norm as in the original conjecture.
\end{section}
\begin{section}{Preliminaries}
We first introduce and motivate the problem in subsection \ref{subsec:Introduction}, after which we fix some initial notation in \ref{subsec:Basic Notation}. Then, we proceed with a reformulation and generalization of the initial conjecture in subsection \ref{subsec:Reformulation of the conjecture and initial approaches}. We explain how a typical approach to prove validity of the conjecture may look like (it has not been proven yet that such approach would fail, nor has it been demonstrated to succeed), before we end the introductory part by introducing the necessary representation-theoretic language that allows for an elegant solution of the two-party case.
\begin{subsection}{Introduction and motivation of the problem statement\label{subsec:Introduction}}
This work is motivated by a conjecture \cite[Conjecture 3.2.7]{dutil-thesis} that was formulated by Nicolas Dutil. The conjecture is motivated by the observation that certain multi-party protocols on quantum systems require the use of time sharing. It is not within the scope of this rather technical contribution to rewrite the history of multi-party state merging and related protocols. Those readers with an interest in the origins of the conjecture that motivated our work are encouraged to pick up the information directly at the source \cite{dutil-thesis}, or in publications such as \cite{fawzi} or \cite{savov}, where other forms of multiparty-typicality are formulated and discussed. In addition to that, we would like to point the reader interested in one-shot formulations to the work \cite{drescher-fawzi}.\\
In order to give a self-contained approach to the question, we will here take the approach of comparing the asymptotic properties of multi-party i.i.d. probability distributions, when the number of copies goes to infinity, with the corresponding situation of multi-party i.i.d. quantum states. From our comparison, it will become clear that multi-party quantum states are in fact potentially missing one property, and the search for this missing property then serves as the starting point of our analysis. We will confine ourselves to the study of no more than three parties, since three is already enough the number of systems where a separation between probability distributions and quantum states can be observed.\\
Let $\bA$, $\bB$ and $\bC$ be finite sets. Let $p$ be a probability distribution on their cartesian product $\bA\times\bB\times\bC$, meaning that $\sum_{(a,b,c)\in\bA\times\bB\times\bC}p(a,b,c)=1$ and $p(a,b,c)\geq0$ for all $(a,b,c)\in\bA\times\bB\times\bC$. From $p$, we can form all its marginal distributions, like for example the distribution $p_\bA$ on $\bA$ defined by setting for all $a\in\bA$ $p_\bA(a):=\sum_{(b,c)\in\bB\times\bC}p(a,b,c)$ and $p_{\bA\bB}(a,b):=\sum_{c\in\bC}p(a,b,c)$ for all $a\in\bA$ and $b\in\bB$. These definitions extend to $p_{\bB}$, $p_\bC$, $p_{\bB\bC}$ or $p_{\bA\bC}$ in a straightforward fashion.\\
To any of these distributions (we define this only for $p$) and any natural number $n$ we can then define the probability distribution $p^{\otimes n}$ on the $n$-fold cartesian product $(\bA\bB\bC)^n$ by $p^{\otimes n}(a^n,b^n,c^n):=\prod_{i=1}^np(a_i,b_i,c_i)$. We can also define, for every $\delta>0$, typical sets
\begin{align}
T_{p,\delta}:=\left\{(a^n,b^n,c^n): \forall(a,b,c)\ \begin{array}{ll}|n^{-1}\cdot N(a,b,c|a^n,b^n,c^n)-p(a,b,c)|\leq\delta\\ p(a,b,c)=0\ \Rightarrow\ N(a,b,c|a^n,b^n,c^n)=0\end{array}\right\},
\end{align}
where $N(a,b,c|a^n,b^n,c^n)$ is the number of times the symbol $(a,b,c)$ appears in the string $(a^n,b^n,c^n)$ and it is understood that all triples $(a,b,c)$ are elements taken from $\in\bA\times\bB\times\bC$. This definition extends to all the marginal distributions, so that we obtain sets such as $T_{p_\bA,\delta}$ or $T_{p_{\bB\bC},\delta}$ and the like.\\
Let $T_{\delta}$ be the set of all $(a^n,b^n,c^n)$ such that $(a^n,b^n,c^n)\in T_{p,\delta}$, $a^n\in T_{p_\bA,\delta}$, $(a^n,c^n)\in T_{p_{\bA\bC,\delta}}$ and so on and so forth for all the possible marginal distributions of $p$. Let
\begin{align}
q^n:=\|p^{\otimes n}\cdot\eins_{T_{\delta}}\|_1^{-1}\cdot p^{\otimes n}\cdot\eins_{T_{\delta}}.
\end{align}
It then holds that
\begin{align}
\|q^n-p^{\otimes n}\|_1\underset{n\to\infty}{\longrightarrow}0,
\end{align}
where $\eins_{T_{p,\delta}}$ denotes the indicator function taking the value $1$ on $T_{\delta}$ and zero else and $\|\cdot\|_1$ is the usual one-norm. Moreover, the distributions $q^n$ ($n\in\nn$) have the property that all the marginal distributions arising from it obey the estimates
\begin{align}\label{eqn:basic-classical-estimate}
\forall\ \mathcal T\subset\{\bA,\bB,\bC\}:\qquad\|q^n_{\mathcal T}\|_2^2\leq2^{-n(H(p_{\mathcal T})-\gamma(\delta,n))},
\end{align}
for all $n\geq N$ for some appropriately chosen (and large enough) $N\in\nn$ and a function $\gamma:\mathbb R_+\times\nn\to\mathbb R_+$ satisfying $\lim_{n\to\infty}\gamma(\delta,n)=0$ for all $\delta$. Here $H$ denotes the Shannon-entropy which is defined by $H(q):=-\sum_{x\in\bX}q(x)\log q(x)$, for arbitrary alphabets $\bX$ and probability distributions $q$ on them, and $\|\cdot\|_2$ is the usual two-norm.\\
Thus, it is possible to find an approximation to $p^{\otimes n}$ that not only approximates $p^{\otimes n}$ asymptotically perfect (note that this implies the same for all the corresponding marginal distributions, since $\|\cdot\|_1$ is monotone under stochastic maps) but also delivers a second (and, actually, up to $k$-th order for any fixed $k\in\nn$) order asymptotic scaling that is dictated by information-theoretically relevant functions.\\
It is a natural question to ask for a similar result for quantum states, and this question is the content of the multiparty typicality conjecture, that we reformulate here for three parties as follows:
\begin{conjecture}[Multiparty typicality conjecture - Conjecture 3.2.7 in \cite{dutil-thesis}]\label{conjecture-1} Consider $n$ copies of an arbitrary multiparty state $\rho_{ABC}$. For any fixed $\eps>0$, $\delta_\mathcal{T}>0$ and $n$ large enough, there exists a state $\Phi_{ABC}$ which satisfies
\begin{align}
\|\Phi_{ABC}-\rho_{ABC}^{\otimes n}\|_1&\leq\nu(\eps)\\
\|\Phi_{\mathcal T}\|_2 ^2&\leq(1-\mu(\eps))2^{-n(S(\rho_\mathcal{T})-\delta_{\mathcal{T}})}
\end{align}
for all non-empty subsets $\mathcal T\subset\{A,B,C\}$. Here, $\nu(\eps)$ and $\mu(\eps)$ are functions of $\eps$ which vanish by choosing arbitrarily small values for $\eps$.
\end{conjecture}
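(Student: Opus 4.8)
The plan is to realise the smoothed state as a compression of $\psi^{\otimes n}$ by typical projectors defined through Schur--Weyl duality, and to exploit the single symmetry shared by all of them --- invariance under the diagonal action of $S_n$ --- to control every marginal at once. For each nonempty $\mathcal T\subseteq\{1,\dots,m\}$ write $\mathcal H_{\mathcal T}=\bigotimes_{i\in\mathcal T}\mathcal H_{C_i}$ and decompose $\mathcal H_{\mathcal T}^{\otimes n}=\bigoplus_\lambda \mathcal Q_\lambda^{\mathcal T}\otimes\cP_\lambda$ under $GL(\mathcal H_{\mathcal T})\times S_n$, where $\cP_\lambda$ is the Specht module and $\mathcal Q_\lambda^{\mathcal T}$ the Weyl module. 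I would let $\Pi_{\mathcal T}$ be the projector onto those isotypic sectors whose normalised diagram $\lambda/n$ lies within $\delta_{\mathcal T}$ of $\mathrm{spec}(\psi_{\mathcal T})$, viewed as an operator on the full space $\mathcal H_{\{1,\dots,m\}}^{\otimes n}$ by tensoring with the identity on the remaining parties. Standard Schur--Weyl typicality then gives $\tr\{\Pi_{\mathcal T}\,\psi^{\otimes n}\}\ge 1-\eps_{\mathcal T}$ with $\eps_{\mathcal T}\to0$, and $\tfrac1n\log_2\dim\cP_\lambda\to H(\lambda/n)\approx S(\psi_{\mathcal T})$ for every typical $\lambda$.

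The candidate is $\Psi=\tfrac1N\,P\,\psi^{\otimes n}\,P^{\dagger}$ with $P=\prod_{\mathcal T}\Pi_{\mathcal T}$ and $N=\tr\{P\,\psi^{\otimes n}P^{\dagger}\}$. The decisive observation, and the reason the argument is insensitive to $m$, is that every $\Pi_{\mathcal T}$ lies in the commutant of the diagonal $S_n$-action on $\mathcal H_{\{1,\dots,m\}}^{\otimes n}$; hence so does $P$, and $\Psi$ is \emph{exactly} $S_n$-invariant. Because the diagonal action factorises across any bipartition $\mathcal T\mid\mathcal T^c$, each marginal $\Psi^{\mathcal T}=\tr_{\mathcal T^c}\Psi$ is $S_n$-invariant on $\mathcal H_{\mathcal T}^{\otimes n}$, so Schur's lemma forces the block form $\Psi^{\mathcal T}=\bigoplus_\lambda \sigma_\lambda\otimes \frac{\eins_{\cP_\lambda}}{\dim\cP_\lambda}$ with $\sigma_\lambda\ge0$ and $\sum_\lambda\tr\sigma_\lambda=1$. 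Writing $q_\lambda=\tr\sigma_\lambda$ and using $\tr\{\sigma_\lambda^2\}\le q_\lambda^2$ gives
\begin{align}
\tr\{(\Psi^{\mathcal T})^2\}=\sum_\lambda\frac{\tr\{\sigma_\lambda^2\}}{\dim\cP_\lambda}\le\sum_\lambda\frac{q_\lambda^2}{\dim\cP_\lambda}.
\end{align}
On the typical sectors $\dim\cP_\lambda\ge 2^{n(S(\psi_{\mathcal T})-\delta_{\mathcal T}/2)}$ (by continuity of $H$), so the typical part of the sum is at most $2^{-n(S(\psi_{\mathcal T})-\delta_{\mathcal T}/2)}$, which beats the target $(1-\mu(\eps))\,2^{-n(S(\psi_{\mathcal T})-\delta_{\mathcal T})}$ for large $n$. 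Thus the flattening of the Specht modules that $S_n$-invariance enforces, together with the representation-theoretic identity $\dim\cP_\lambda\approx 2^{nH(\lambda/n)}$, delivers the collision bound for all $2^m-1$ subsets simultaneously from one symmetry.

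For the trace-norm estimate I would invoke a non-commutative union bound (gentle measurement applied sequentially): since $\tr\{\Pi_{\mathcal T}\,\psi^{\otimes n}\}\ge1-\eps_{\mathcal T}$ for each $\mathcal T$, one gets $\|P\,\psi^{\otimes n}P^{\dagger}-\psi^{\otimes n}\|_1\le \sum_{\mathcal T}2\sqrt{\eps_{\mathcal T}}$ and $N\ge1-\sum_{\mathcal T}\eps_{\mathcal T}$, so $\|\Psi-\psi^{\otimes n}\|_1\le\nu(\eps):=O\!\big(2^m\sqrt{\eps}\big)$, which vanishes with $\eps$ for fixed $m$.

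The main obstacle I anticipate is the interface between the two halves of the argument: the collision bound above used only that the weight $q_\lambda$ sits on typical $\lambda$, but because the $\Pi_{\mathcal T}$ do not commute, $\Psi^{\mathcal T}$ is only \emph{approximately} supported there, and the atypical leakage $\sum_{\lambda\ \mathrm{atyp}}q_\lambda^2/\dim\cP_\lambda$ must be shown to be negligible against the very small target $2^{-n(S-\delta)}$. I would close this either by replacing $P$ with the projector onto $\bigcap_{\mathcal T}\mathrm{range}\,\Pi_{\mathcal T}$ (still $S_n$-invariant, so the block argument is untouched) and bounding its overlap with $\psi^{\otimes n}$ by the non-commutative union bound, or by a direct large-deviation estimate showing $q_\lambda\le 2^{-n\,D(\lambda/n\,\|\,\mathrm{spec}\,\psi_{\mathcal T})}$ on atypical sectors, which suppresses those terms super-exponentially. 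Reconciling non-commutativity of the subset projectors with exact typical support is the one place where genuine work beyond bookkeeping is required; everything else is forced by Schur--Weyl duality and Schur's lemma.
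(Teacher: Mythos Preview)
The paper does not prove Conjecture~\ref{conjecture-1} in full; it establishes only the bipartite case $m=2$ (Theorem~\ref{thetheorem}) and explicitly states in its Conclusion that it could not extend the argument even to $m=3$. Your proposal, by contrast, is a sketch for general $m$. So the comparison is really between your general strategy and the paper's special-case proof.

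For $m=2$ the paper's argument is considerably simpler than yours, because in that case the three projectors $P^A_\eps\otimes\eins$, $\eins\otimes P^B_\eps$ and $P^{AB}_\delta$ \emph{all commute} (Remark~\ref{remark-1}): each single-party isotypic projector lies in the commutant of the diagonal $S_n$-action, and $P^{AB}_\delta$ is a central idempotent of that action. Hence $(P^A_\eps\otimes P^B_\eps)P^{AB}_\delta$ is itself a projection, gentle measurement applies directly, and one has the operator inequality $\tr_B\Phi\le P^A_\eps\,\rho_A^{\otimes n}\,P^A_\eps$, which gives the marginal collision bound with \emph{exact} typical support --- no atypical leakage to worry about. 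Your $S_n$-invariance/Schur's-lemma block decomposition is correct and elegant, but for $m=2$ it is not needed.

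For $m\ge3$ your proposal runs into precisely the obstruction the paper names: projectors such as $P^{AB}_\lambda\otimes\eins_C$ and $\eins_A\otimes P^{BC}_{\lambda'}$ need not commute, and this is exactly what produces the ``atypical leakage'' you flag. Neither of your two suggested fixes closes the gap as stated. For the first, the non-commutative union bound controls a sequential product $\Pi_K\cdots\Pi_1\rho\,\Pi_1\cdots\Pi_K$, not the projector onto $\bigcap_{\mathcal T}\mathrm{range}\,\Pi_{\mathcal T}$; two rank-one projectors can each have overlap $1-\eps$ with a pure state while their meet is $0$, so there is no generic bound of the kind you need. For the second, trace-norm closeness $\|\Psi-\psi^{\otimes n}\|_1\le\nu(\eps)$ only yields $q_\lambda\le \nu(\eps)+\mathrm{poly}(n)\,2^{-nD(\bar\lambda\|r_{\mathcal T})}$, so for an atypical $\lambda$ with small Specht dimension (e.g.\ $\lambda=(n)$, $\dim\cP_{(n)}=1$) the term $q_\lambda^2/\dim\cP_\lambda$ can be of order $\nu(\eps)^2$, a constant, which swamps the target $2^{-n(S(\psi_{\mathcal T})-\delta_{\mathcal T})}$. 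In short, the place you identify as ``the one place where genuine work is required'' is a genuine gap, and it is the same gap the paper leaves open.
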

Before we come to a more detailed discussion of the conjecture, we first fix some of the notation that is necessary for an understanding of the topic, and of above conjecture.
\end{subsection}
%%%%%%%%%%%%%%%%%%%%%%%%%%
\begin{subsection}{Basic Notation\label{subsec:Basic Notation}}
All Hilbert spaces are assumed to have finite dimensions and are over the field $\cc$. The set of linear operators from $\hr$ to $\hr$ is denoted $\mathcal B(\hr)$. The adjoint of $b\in\mathcal B(\hr)$ is written $b^\dag$.\\
$\cs(\hr)$ is the set of states, i.e. positive semi-definite operators with trace (the trace function on $\mathcal B(\hr)$ is written as $\tr$) $1$ acting on the Hilbert space $\hr$. Pure states are given by projections onto one-dimensional subspaces. A vector $x\in\hr$ of length one spanning such a subspace will therefore be referred to as a state vector, the corresponding state will be written as
$|x\rangle\langle x|$.\\
The von Neumann entropy of a state $\rho\in\mathcal{S}(\hr)$ is given by
\begin{equation}S(\rho):=-\textrm{tr}(\rho \log\rho),\end{equation}
where $\log(\cdot)$ denotes the base two logarithm which is used throughout the paper.\\
Given two states $\rho,\sigma\in\cs(\mathbb C^d)$, the relative entropy of them is defined as
\begin{align}
D(\rho\|\sigma):=\left\{\begin{array}{l l}\tr\{\rho(\log(\rho)-\log(\sigma)\},&\mathrm{if}\ \supp(\rho)\subset\supp(\sigma),\\\infty,&\mathrm{else} \end{array}\right.
\end{align}
Another way of measuring distance between quantum states is obviously given by using the one-norm $\|\cdot\|_1$, which is defined via setting
\begin{align}
\|A\|_1:=\tr\left\{\sqrt{A^\dag A}\right\}\qquad \forall A\in\mathcal B(\hr).
\end{align}
Other well-known norms on operator spaces that need to be defined for an understanding of the topic are the $k$-norms, for arbitrary $k\in\mathbb N$:
\begin{align}
\|A\|_k:=\left(\tr\left\{(A^\dag A)^{k/2}\right\}\right)^{1/k}\qquad \forall A\in\mathcal B(\hr).
\end{align}
In order to understand the multiparty-typicality conjecture we additionally need to define marginal states of multiparty-states. This is done by first introducing the partial trace $\tr_B$. Given two Hilbert-spaces $\hr_A$ and $\hr_B$, this is a map $\tr_B:\mathcal B(\hr_A\otimes\hr_B)\to\mathcal B(\hr_A)$ is the unique map such that for all operators $X\in\mathcal B(\hr_A\otimes\hr_B)$ of the form $X=F\otimes G$ for some $F\in\mathcal B(\hr_A)$ and $G\in\mathcal B(\hr_B)$ we have $\tr_B(X)=F$.\\
Letting now $\rho\in\cs(\hr_A\otimes\hr_B)$, we can define its marginal density operators $\rho_A\in\cs(\hr_A)$ and $\rho_B\in\cs(\hr_B)$ via $\rho_A:=\tr_B\{\rho_{AB}\}$ and $\rho_B:=\tr_A\{\rho_{AB}\}$. For a given $\rho_{AB}\in\cs(\hr_A\otimes\hr_B)$ we denote its spectrum (the ordered lists of its eigenvalues, starting with the largest one, counting multiplicities) as $r_{AB}$. The spectra of $\rho_A$ and $\rho_B$ are denoted $r_A$ and $r_B$, respectively.

For a finite set $\mathbf X$ the notation $\cP(\mathbf X)$ is reserved for the set of probability distributions on $\mathbf X$, and $|\mathbf X|$ denotes its cardinality. Given two alphabets $\bX$ and $\bY$ we will sometimes denote elements of $\cP(\bX\times\bY)$ by e.g. $p_{\bX\bY}$, and in that case it is understood that $p_\bX\in\cP(\bX)$ and $p_\bY\in\cP(\bY)$ denote the respective marginal distributions of $p_{\bX\bY}$. For any $n\in\nn$, we define $\bX^n:=\{(x_1,\ldots,x_n):x_i\in\bX\ \forall i\in\{1,\ldots,n\}\}$, we also write $x^n$ for the elements of $\bX^n$. Given such element, $N(\cdot|x^n)$ denotes its type, and is defined through $N(x|x^n):=|\{i:x_i=x\}|$. The notion of type is actually slightly more general than that, as any function $N:\bX\to\nn$ satisfying $\sum_{x\in\bX}N(x)=n$ can be seen as the type of some $x^n\in\bX^n$. Therefore, we will make a slightly more general use of the term ``type'' and use it to denote any function $N:\bX\to\nn$. If the number $n=\sum_{x\in\bX}N(x)$ needs to be specified we will speak of an $n$-type. Every $n$-type naturally defines a set $T_N\subset\bX^n$ through $T_N:=\{x^n\in\bX^n:N(\cdot|x^n)=N(\cdot)\}$. Normalized types are defined as $\bar N(x|x^n):=\tfrac{1}{n}N(x|x^n)$ for all $x^n\in\bX^n$ and $x\in\bX$. For any natural number $n\in\nn$, the notion of type defines a subset $\cP_0^n(\bX)\subset\cP(\bX)$ via $\cP_0^n(\bX):=\{\bar N(\cdot|x^n):x^n\in\bX^n\}$.\\
The complement of $\mathbf D\subset\bX$ within $\bX$ is denoted $\mathbf D^\complement$.\\
For any natural number $L$, we define $[L]$ to be the shortcut for the set $\{1,...,L\}$.
\end{subsection}
%%%%%%%%%%%%%%%%%%%%%%%%%%
\begin{subsection}{Reformulation of the conjecture and initial approaches\label{subsec:Reformulation of the conjecture and initial approaches}}
With above additional structure and keeping in mind that the estimate (\ref{eqn:basic-classical-estimate}) holds true in a more general sense for any of the $k$-norms, one feels tempted to reformulate and extend Conjecture \ref{conjecture-1} to
\begin{conjecture}[Extended multiparty typicality conjecture]\label{conjecture-2} Let $\hr_A$, $\hr_B$ and $\hr_C$ be Hilbert spaces. Let $\rho_{ABC}\in\cs(\hr_A\otimes\hr_B\otimes\hr_C)$. There is a sequence $(\Phi_{ABC}^n)_{n\in\nn}$ of quantum states satisfying $\Phi_{ABC}^n\in\cs((\hr_A\otimes\hr_B\otimes\hr_C)^{\otimes n})$ for all $n\in\nn$ such that for every natural number $k\geq2$ the following holds true: There is a function $\gamma:\mathbb R_+\times\nn\to\mathbb R_+$ such that for all $n\in\nn$ and for all non-empty sets $\mathcal T\subset\{A,B,C\}$,
\begin{align}
\|\Phi_{ABC}^n-\rho_{ABC}^{\otimes n}\|_1&\leq\gamma(\eps,n)\\
\|\tr_{\tau^\complement}\{\Phi_{ABC}^n\}\|_k ^k&\leq2^{-n\cdot(k-1)\cdot(S(\rho_\mathcal{T})-\gamma(\eps,n))}
\end{align}
hold true. Moreover, for all $k\geq2$ it holds that $\lim_{n\to\infty}\gamma(\eps,n)=0$ for all $\eps>0$.
\end{conjecture}
It can easily checked that validity of above conjecture would imply validity of Conjecture \ref{conjecture-1}. The main obstacle one is confronted with here is how to make a guess for the approximating state $\Phi_{ABC}^n$. Constructions that are straightforward generalizations of the one employed in our introductory discussion for probability distributions on finite alphabets do not directly translate to the problem at hand.\\
A typical construction in the quantum case would involve the use of what is called ``frequency-typical subspaces'' (see for example \cite{wilde-book} for precise formulations). We give an exemplary introduction to the topic, thereby concentrating on the two-party case. Any state $\rho_{AB}$ can, upon a choice of the right basis, be written as $\rho_{AB}=\sum_ir_{ABC}(i)|e_i\rangle\langle e_i|$, where $r_{ABC}(i)$ are the singular values of $\rho_{ABC}$ and $|e_i\rangle\langle e_i|$ mutually orthogonal rank-one projections. For a ``frequency'' or ``type'' (a nonnegative function $t:\{1,\ldots,d\}\to\nn$ satisfying $\sum_it(i)=n$), define the frequency typical subspaces
\begin{align}
V_t:=\linspan(\{e_{i_1}\otimes\ldots\otimes e_{i_n}:|\{k:i_k=i\}|=t(i)\ \mathrm{for\ all\ }i\}.
\end{align}
The corresponding orthogonal projections $P_{V_t}$ onto these subspaces are the frequency-typical subspaces of $\rho_{AB}$. Fixing a $\delta>0$, one could now define $P_\delta:=\sum_{\|t-r_{AB}\|_1\leq\delta}P_t$, trying to reproduce the known approach that we outlined in the introduction. This would lead to the definition
\begin{align}
\Psi_{AB}^n:=(\tr\{P_\delta\rho_{AB}^{\otimes n}\})^{-1}P_\delta\rho_{AB}^{\otimes n}P_\delta.
\end{align}
However, no direct method has so far been demonstrated to yield the desired bounds for this state, the only approach \cite{dutil-thesis} that is known to the author uses a more complex approach.\\
In contrast to that, our approach is able to fully satisfy the classical intuition gained from our introduction, albeit only for two parties.
\end{subsection}
%%%%%%%%%%%%%%%%%%%%%%%%%%
\begin{subsection}{Notation for representation theoretic objects\label{subsec:Notation for representation theoretic objects}} The symbols $\lambda,\lambda', \nu,\nu', \mu,\mu'$ will be used to denote Young frames. The set of Young frames with at most $d\in\nn$ rows and $n\in\nn$ boxes is denoted $\bbmY_{d,n}$.\\ For a Young Tableau $T$, we write $T_{ij}$ for the entry of $T$ in the $i$-th row and $j$-th column.\\ In the remainder, $\hr_A,\hr_B,\hr$ denote Hilbert spaces with dimensions $d_A,d_B,d$. The numbers $d_A,d_B$ will be arbitrary but constant, while $d$ serves as a ``dummy''-dimension for intermediate statements. Dimensions will also be assumed to be strictly larger than one, since otherwise the statements made in this work become trivial.\\ The symbol $\mathbb B^{A\otimes B}$ denotes the product representation of $S_n\times S_n$ on $\hr_A^{\otimes n}\otimes\hr_B^{\otimes n}$ induced by the standard representations $\mathbb B^A,\mathbb B^B$ of the symmetric group $S_n$ on $\hr_A^{\otimes n}$ and $\hr_B^{\otimes n}$.\\ A representation $\mathbb B^{AB}$ of $S_n$ on $\hr_A^{\otimes n}\otimes\hr_B^{\otimes n}$ is then given by the obvious reordering of the standard representation of $S_n$ on $(\hr_A\otimes\hr_B)^{\otimes n}$. It holds
\begin{align}
\mathbb B^{AB}(\sigma)=\mathbb B^A(\sigma)\otimes\mathbb B^B(\sigma)\ \mathrm{for\ all}\ \sigma\in S_n.\label{eqn13}
\end{align}
The unique complex vector space carrying the irreducible representation of $S_n$ corresponding to a Young Tableau $\lambda$ will be written $F_\lambda$.\\ The multiplicity of an irreducible subspace of $\mathbb B^{X}$ (where $X\in\{A,B,AB,A\otimes B\}$) corresponding to a Young frame $\lambda$ is denoted $m_\lambda^X$.\\
Projections onto the irreducible subspaces of $\mathbb B^{AB}$ are denoted by $P_{\lambda,k}^{AB}$ ($\lambda\in \bbmY_{d_Ad_B,n},\ k\in[m^{AB}_\lambda]$). Implicit here is the choice of a specific set of these, and this set is chosen such that every two different projections are orthogonal (this may be seen as a specific choice of bases for the invariant subspaces $U_\lambda,\ \lambda\in \bbmY_{d_Ad_B,n}$, of the reordering of the standard representation $U\mapsto U^{\otimes n}$ of the unitary group on $(\hr_A\otimes\hr_B)^{\otimes n}$). Another constraint will be given by equation (\ref{eqn12}). Accordingly, projections onto irreducible subspaces of $\mathbb B^{A\otimes B}$ get labelled $P_{\mu,i}^A\otimes P_{\nu,j}^B$ ($\mu,\nu\in \bbmY_{d_A,n},\bbmY_{d_B,n},\ i,j\in[m_\mu^A],[m_\nu^B]$).\\ Whenever it feels right, the superscripts $A,B,AB$ will be omitted. To make up for that, in this case, the symbols $\lambda,\lambda'$ will only be used for projections on $AB$, while $\mu,\mu'$ indicate that a projection on $A$ is being used and $\nu,\nu'$ are only subscripts for projections on the $B$-part.\\ Define, for arbitrary $\mu\in \bbmY_{d_A,n},\ \nu\in \bbmY_{d_B,n},\ \lambda\in \bbmY_{d_Ad_B,n}$ the projections \begin{equation}
 P^A_{\mu}:=\sum_{i=1}^{m_\mu^A}P^A_{\mu,i},\qquad P^B_{\nu}:=\sum_{j=1}^{m_\nu^B}P^A_{\nu,j},\qquad P^{AB}_{\lambda}:=\sum_{k=1}^{m_\lambda^{AB}}P^{AB}_{\lambda,k}.
\end{equation}
The choice we just made for the set $\{P^{AB}_{\lambda,i}:\lambda\in \bbmY_{d,n},\ i\in[m_\lambda]\}$ gets a little more specific now:\\ We will choose these projections such that each $P^A_{\mu}\otimes P^B_{\nu}$ (note that these projections correspond to subspaces which are only \emph{invariant} under the action of $\mathbb B^{AB}$) can, by choosing an appropriate set $\M$, be written as
\begin{align} P^A_{\mu}\otimes P^B_{\nu}=\sum_{(\lambda,i)\in\M}P_{\lambda,i}.\label{eqn12}
\end{align} This is possible due to equation (\ref{eqn13}). Conversely, it implies that each $P_{\lambda,i}$ obeyes the inequality \begin{align} P_{\lambda,i}\leq P^A_{\mu}\otimes P_{\nu}^B \end{align} for exactly one specific choice of $\mu,\nu\in \bbmY_{d_A,n},\bbmY_{d_B,n}$.\\ The set of states on a Hilbert space $\hr$ is written $\cs(\hr)$. The set of probability distributions on a finite set $\bX$ is denoted $\mathcal P(\bX)$, the cardinality of $\bX$ by $|\bX|$.\\ For $\lambda\in \bbmY_{d,n}$, $\bar\lambda\in\cP([d])$ is defined by $\bar\lambda(i):=\lambda_i/n$. If $\rho\in\cs(\hr)$ with $\dim\hr=d$ has spectrum $s\in\cP([d])$, then it will always be assumed that $s(1)\geq\ldots\geq s(d)$ holds and the distance between a spectrum $s$ and a Young frame $\lambda\in \bbmY_{d,n}$ is measured by $\|\bar\lambda-s\|_1:=\sum_{i=1}^d|\bar\lambda(i)-s(i)|$. For two functions $t,t':[d]\to\nn$ we write $t\preceq t'$ if $\sum_{i=1}^kt'(i)\geq\sum_{i=1}^kt(i)$ holds for all $k=1,\ldots,d$.\\
We now define two important entropic quantities, both of which use the base two logarithm. Throughout this work, this function will be written $\log$. Given a finite set $\bX$ and two probability distributions $r,s\in\cP(\bX)$, we define the relative entropy $D(r||s)$ by \begin{align} D(r||s):=\left\{\begin{array}{ll}\sum_{x\in \bX}r(x)\log(r(x)/s(x)),&\mathrm{if}\ s\gg r\\ \infty,&\mathrm{else}\end{array}\right. \end{align} In case that $D(r||s)=\infty$, for a positive number $a>0$, we use the convention $2^{-aD(r||s)}=0$. The relative entropy is connected to $\|\cdot\|$ by the Pinsker's inequality $D(r||s)\geq\frac{1}{2\ln(2)}\|r-s\|^2$. The entropy of $r\in\cP(\bX)$ is defined by the formula \begin{align} H(r):=-\sum_{x\in \bX}r(x)\log(r(x)). \end{align}
\end{subsection}
\end{section}
\begin{section}{Result}
As our main result, we prove the extended multiparty typicality conjecture for two parties, thereby automatically including the original case for two parties. Our result is based on the following sequence of approximating states: For a given $\rho_{AB}$, $\epsilon>0$ and $n\in\nn$ we set
\begin{align}
\Phi_{AB}^n:=\left(\tr\{\left(P_\eps^A \otimes P_\eps^B\right)P_\eps^{AB}\rho_{AB}^{\otimes n}\}\right)^{-1}\cdot\left(P_\eps^A \otimes P_\eps^B\right)P_\eps^{AB}\rho_{AB}^{\otimes n}\left(P_\eps^A \otimes P_\eps^B\right).
\end{align}
Here, the operators $P$ are projections onto suitable representations of $\mathbb B^{A\otimes B}$ and $\mathbb B^{AB}$. A proper definition is given within the next lines. The state $\Phi_{AB}^n$ has the obvious marginal states $\Phi_A^n:=\tr_{\hr_B^{\otimes n}}\{\Phi_{AB}^n\}$ and $\Phi_B^n:=\tr_{\hr_A^{\otimes n}}\{\Phi_{AB}^n\}$.\\
As an additional result, we also give the corresponding lower bounds in Theorem \ref{thm:additionaltheorem}.\\
The necessary estimates for the sequence $(\Phi_{AB}^n)_{n\in\nn}$ to fulfill the requirements of the extended multiparty typicality conjecture are given in inequalities (\ref{eqn-thm:thetheorem-8}), (\ref{eqn-thm:thetheorem-4}) and the right hand inequality of (\ref{eqn-thm:thetheorem-1}). The proof of these inequalities is almost trivial.\\
The remaining inequalities are stated only for sake of completeness, although especially the left hand inequality in (\ref{eqn-thm:thetheorem-1}) is comparably hard to prove. In order to state the theorem, we need to define ``cutted'' $\epsilon$-balls $U_\eps(r)$ of width $\eps$ around a distribution $r\in\cP([d])$ as follows: First, take the usual $\mathfrak B_\eps(r):=\{t\in\cP([d]):\|r-t\|_1\leq\epsilon\}$. Then, set $\mathfrak A(r):=\{t\in\cP([d]):r(i)=0\Rightarrow t(i)=0\ \forall i\in[d]\}$. Finally, define the cutted ball as
\begin{align}
\mathfrak C_\eps(r):=\mathfrak A(r)\cap\mathfrak B_\eps(r),
\end{align}
and for every $n\in\nn$ we will use the additional definition
\begin{align}
\mathfrak C_\eps^n(r):=\{\lambda\in\bbmY_{d,n}:\bar\lambda\in\mathfrak C_\eps(r)\}.
\end{align}
With the use of these cutted balls we can define specific projections as follows: Let $\rho_{AB}\in\cs(\hr_A\otimes\hr_B)$ have spectrum $r_{AB}$ and marginals $\rho_A,\rho_B$ with corresponding spectra $r_A,r_B$. For every $\eps>0$ and $n\in\mathbb N$, define the projections
\begin{align}
P^A_\eps&:=\sum_{\mu\in\mathfrak C_\eps^n(r_A)}P^A_{\mu}\qquad\in\mathcal B(\hr_A^{\otimes n}),\\
P^B_\eps&:=\sum_{\nu\in\mathfrak C_\eps^n(r_B)}P^B_{\nu}\qquad \in B(\hr_B^{\otimes n}),\\
P_\eps^{AB}&:=\sum_{\lambda\in\mathfrak C_\eps^n(r_{AB})}P^{AB}_{\lambda}\qquad\in\mathcal B(\hr_{AB}^{\otimes n}).
\end{align}
The dependence of the projections onto the parameter $n$ will, here and in the following, be suppressed in order to enhance readability. Further, it is understood that $\lambda$, $\mu$ and $\nu$ are Young frames taken from $\bbmY_{d_A,n}$, $\bbmY_{d_B,n}$ and $\bbmY_{d_Ad_B,n}$, repsectively. We are ready to formulate our main theorem:
\begin{theorem}\label{thm:thetheorem}
Let $\rho_{AB}\in\cs(\hr_A\otimes\hr_B)$ have spectrum $r_{AB}$ and marginals $\rho_A,\rho_B$ with corresponding spectra $r_A,r_B$. For every $n\in\nn$, set
\begin{align}\label{eqn:fundamental-definition}
\Phi_{AB}^n:=\left(\tr\{\left(P_\eps^A \otimes P_\eps^B\right)P_\eps^{AB}\rho_{AB}^{\otimes n}\}\right)^{-1}\cdot\left(P_\eps^A \otimes P_\eps^B\right)P_\eps^{AB}\rho_{AB}^{\otimes n}\left(P_\eps^A \otimes P_\eps^B\right).
\end{align}
For every natural number $k\geq2$ there is a function $\gamma:\mathbb R_+\times\nn\mapsto\mathbb R_+$ with the property that, for every $\epsilon>0$, $\lim_{n\to\infty}\gamma(\epsilon,n)=0$ and an absolute constant $c$ such that for all $n\in\nn$ we have
\begin{align}
\label{eqn-thm:thetheorem-4}\|\Phi_{AB}^n-\rho_{AB}^{\otimes n}\|_1&\leq2^{-n\cdot(c\cdot\epsilon^2-\gamma(\eps,n))}\\
\label{eqn-thm:thetheorem-8}\tr\{\left(\Phi_{A}^n\right)^k\}&\leq2^{-n\cdot(k-1)\cdot(H(r_A)-\gamma(\eps,n))}\\
\label{eqn-thm:thetheorem-9}\tr\{\left(\Phi_{B}^n\right)^k\}&\leq2^{-n\cdot(k-1)\cdot(H(r_B)-\gamma(\eps,n))}\\
\label{eqn-thm:thetheorem-1}\tr\{\left(\Phi_{AB}^n\right)^k\}&\leq2^{-n\cdot(k-1)\cdot(H(r_{AB})-\gamma(\eps,n))}.
\end{align}
\end{theorem}
We note that the function $\gamma$ does in addition depend on the Hilbert space dimensions $d_A$ and $d_B$ and on the minimal nonzero eigenvalues of $\rho_{AB}$, $\rho_A$ and $\rho_B$. Exact dependencies can be extracted from the proof, for example inequality (\ref{eqn:s_min}) introduces the dependence between $\gamma$ and $\rho_{AB}$. Moreover, it holds that $c=(4\ln2)^{-1}$.\\
During proofs we will use various approximation techniques, some of which are only valid when $\eps\in(0,1/2)$. The resulting estimates are then collected to produce the functions $\gamma$. It is understood that $\gamma(\eps,n)=1$ whenever $\eps\geq1/2$. The same applies to the functions $\nu_k$ in our next theorem:
\begin{theorem}\label{thm:additionaltheorem}
Under the preliminaries of Theorem \ref{thm:thetheorem}, there exists for every $k\geq2$ a function $\nu_k:\mathbb R_+\times\nn\to\mathbb R_+$ satisfying $\lim_{n\to\infty}\nu_k(\eps,n)=0$ for all $\eps>0$ such that
\begin{align}
\label{eqn:additionaltheorem-1}\tr\{\left(\Phi_{AB}^n\right)^k\}&\geq2^{-n\cdot(k-1)\cdot(H(r_{AB})-\nu_k(\eps,n))}.
\end{align}
\end{theorem}
\begin{remark}\label{remark-0}
This second theorem, albeit very similar in nature to the first, does already give a hint concerning the complexity of the estimates of multiparty states like the one constructed here. Compared to the proof of Theorem \ref{thm:thetheorem}, the proof of Theorem \ref{thm:additionaltheorem} is rather involved. Moreover, we were not able to add to it the obvious lower bounds on $\tr\{(\Phi_A^n)^k\}$ and $\tr\{(\Phi_B^n)^k\}$.\\
Following the ideas presented in the proof of Theorem \ref{thm:additionaltheorem} leads one to consider estimates of the form $\tr\{(\rho_A^{(k-1)})^{\otimes n}\tr_{\hr_B^{\otimes n}}\{(\eins_{\hr_A}^{\otimes n}\otimes P_\nu^B)\rho_{AB}^{\otimes n}P_\lambda^{AB}(\eins_{\hr_B}^{\otimes n}\otimes P_\nu^B)\}\}$ in terms of a deviation of $\bar\nu$ and $\bar\lambda$ from $r_A$ and $r_{AB}$. A detailed analysis will be the topic of future work.
\end{remark}
\begin{remark}
\label{remark-1} Note that $P_\delta^{AB}$ commutes with $P^A_\eps\otimes P_\eps^B$ as well as with $\rho_{AB}^{\otimes n}$. While this may seem too obvious to be stated here, it has a grave impact on the matter: Our proofs make heavy use of this fact. In addition, the straightforward extension of our approach to three parties would make use of projections of the form $P_\lambda^{AB}\otimes P^C_\mu$, and these do in general not commute with, for example, projections of the form $P^A_\nu\otimes P^{BC}_\xi$.\\
This simple observation already completely explains where our approach fails. In addition to that, it gives an idea as to what relations may provide useful objects of study in future research.
\end{remark}
We will need a few preliminary results before proving these theorems. First, a few estimates are needed:\\ With $h(i,j)$ denoting Hook-lenghts (see e.g. \cite{sternberg} for a definition of these combinatorial quantities), the dimensions of the irreducible subspaces of any representation of $S_n$ on $(\mathbb C^d)^{\otimes n}$ ($d>0$) obey the following estimates.
\begin{eqnarray} \frac{n!}{\prod_{i=1}^n(\lambda_i+d+1)!}\leq\frac{n!}{\prod_{(i,j)\in\lambda}h(i,j)}=\dim F_\lambda\leq\frac{n!}{\prod_{i=1}^d\lambda_i!}\qquad (\lambda\in \bbmY_{d,n}).\label{eqn1} \end{eqnarray} Also, we are going to employ the following estimate taken from \cite{csiszar-koerner}, Lemma 2.3: \begin{equation} \frac{1}{(n+1)^d}2^{nH(\overline{\lambda})}\leq\frac{n!}{\prod_{i=1}^d\lambda_i!}\leq2^{nH(\overline{\lambda})}\qquad (\lambda\in \bbmY_{d,n})\label{eqn2} \end{equation} as well as, with $\overline{\lambda+d+1}(i):=\frac{1}{n+d(d+1)}(\lambda_i+d+1)$, \begin{eqnarray} \|\overline{\lambda}-\overline{\lambda+d+1}\|&=&\sum_{i=1}^d|\frac{\lambda_i}{n}-\frac{\lambda_i+d+1}{n+d(d+1)}|\\ &\leq&d\cdot \max_{i=1,\ldots,d}|\frac{\lambda_i}{n}-\frac{\lambda_i+d+1}{n+d(d+1)}|\\ &=&d\cdot \max_{i=1,\ldots,d}|\frac{\lambda_id(d+1)-n(d+1)}{n(n+d(d+1))}|\\ &\leq&\frac{d(d+1)}{n^2}\cdot \max_{i=1,\ldots,d}|\lambda_id-n|\\ &\leq&\frac{d(d+1)^2}{n}\label{eqn3}\\ &\leq&\frac{(d+1)^3}{n}\\ &\leq&\frac{8d^3}{n}\\ (\mathrm{if\ }d\geq2)\qquad&\leq&\frac{d^6}{n} \end{eqnarray} and, at last, Lemma 2.7 from \cite{csiszar-koerner}: \begin{lemma}\label{lemma1}
 If, for $\mathbf A$ a finite alphabet and $p,q\in\cP(\mathbf A)$ we have $|p-q|\leq\Theta\leq1/2$, then
\begin{equation}
 |H(p)-H(q)|\leq-\Theta\log\frac{\Theta}{|\mathbf A|}.
\end{equation} \end{lemma}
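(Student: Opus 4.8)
The plan is to reduce the two‑distribution statement to a one‑variable inequality for the function $\eta(x):=-x\log x$ (with $\eta(0):=0$), exploiting the identity $H(p)-H(q)=\sum_{a\in\mathbf A}\bigl(\eta(p(a))-\eta(q(a))\bigr)$ together with the facts that $\eta$ is concave on $[0,1]$ (indeed $\eta''(x)=-1/(x\ln 2)<0$) and that $\eta(0)=\eta(1)=0$. First I would establish the scalar estimate
\[
\bigl|\eta(t)-\eta(s)\bigr|\;\le\;\eta\bigl(|t-s|\bigr)\qquad\text{whenever }t,s\in[0,1]\text{ and }|t-s|\le\tfrac12 ,
\]
then apply it term by term, and finally combine it with Jensen's inequality and a monotonicity check to get the closed form.

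For the scalar estimate, put $\theta:=|t-s|\le\tfrac12$ and, relabelling if necessary, assume $s\le t=s+\theta$. The upper bound $\eta(s+\theta)-\eta(s)\le\eta(\theta)$ uses only concavity: $s\mapsto\eta(s+\theta)-\eta(s)$ has derivative $\eta'(s+\theta)-\eta'(s)\le 0$, hence is non‑increasing, hence is at most its value at $s=0$, which is $\eta(\theta)-\eta(0)=\eta(\theta)$. For the lower bound $\eta(s)-\eta(s+\theta)\le\eta(\theta)$ I would pass to the (also concave) function $v\mapsto\eta(1-v)$ and run the same monotonicity argument, using $s+\theta\le1$, to obtain $\eta(s)-\eta(s+\theta)\le\eta(1-\theta)$; it then remains to check $\eta(1-\theta)\le\eta(\theta)$ for $\theta\in[0,\tfrac12]$, which holds because $g(\theta):=\eta(\theta)-\eta(1-\theta)$ vanishes at $\theta=0$ and at $\theta=\tfrac12$ and satisfies $g''(\theta)=\frac1{\ln2}\bigl(\frac1{1-\theta}-\frac1\theta\bigr)<0$ on $(0,\tfrac12)$, so $g$ is concave and therefore non‑negative there. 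This lower‑bound direction is the one delicate step, and it is exactly where the hypothesis $\Theta\le\tfrac12$ is consumed — for larger separations the scalar inequality simply fails (take $t=1$ and $s$ close to $0$).

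With the scalar estimate in hand, set $\theta_a:=|p(a)-q(a)|$. Since $\sum_a\theta_a=|p-q|\le\Theta\le\tfrac12$, each $\theta_a\le\tfrac12$, so the estimate applies to every summand and
\[
|H(p)-H(q)|\;\le\;\sum_{a\in\mathbf A}\eta(\theta_a)\;\le\;|\mathbf A|\,\eta\!\left(\frac{|p-q|}{|\mathbf A|}\right)\;=\;-\,|p-q|\,\log\frac{|p-q|}{|\mathbf A|},
\]
the middle inequality being Jensen's inequality for the concave $\eta$ applied to the uniform average of the $\theta_a$. To conclude, I would observe that $\theta\mapsto-\theta\log(\theta/|\mathbf A|)$ is non‑decreasing on $(0,\tfrac12]$ when $|\mathbf A|\ge2$ (its derivative there is $-\log(\theta/|\mathbf A|)-1/\ln 2\ge 2-1/\ln 2>0$, since $\theta/|\mathbf A|\le\tfrac14$), while the case $|\mathbf A|=1$ is trivial ($p=q$); hence enlarging $|p-q|$ to $\Theta$ on the right only weakens the bound, yielding $|H(p)-H(q)|\le-\Theta\log(\Theta/|\mathbf A|)$. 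The only part requiring genuine care is, as noted, the lower‑bound half of the scalar lemma; everything else is a direct consequence of concavity and Jensen.
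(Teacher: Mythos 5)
Your proof is correct, and since the paper does not prove this lemma at all (it is quoted verbatim as Lemma 2.7 of Csisz\'ar--K\"orner), your argument is in effect the standard textbook proof: the scalar bound $|\eta(t)-\eta(s)|\le\eta(|t-s|)$ for $|t-s|\le\tfrac12$ (where $\eta(x)=-x\log x$), followed by termwise application, Jensen's inequality for the concave $\eta$, and the monotonicity of $\theta\mapsto-\theta\log(\theta/|\mathbf A|)$ on $(0,\tfrac12]$. All the delicate points — the lower-bound half of the scalar estimate via $\eta(1-\theta)\le\eta(\theta)$, where the hypothesis $\Theta\le\tfrac12$ is used, and the trivial case $|\mathbf A|=1$ — are handled correctly.
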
 Combining equations (\ref{eqn1}) and (\ref{eqn2}) leads to the estimate \begin{equation}
 \dim F_\lambda\leq 2^{nH(\overline{\lambda})}\qquad (\lambda\in \bbmY_{d,n}).\label{eqn4}
\end{equation} Deriving a lower bound on $\dim F_\lambda$ is slightly more involved: Let $n\geq2d^2$. Then \begin{eqnarray}
 \dim F_\lambda&=&\frac{n!}{\prod_{i=1}^d(\lambda_i+d+1)!}\\
&=&\frac{1}{(n+d(d+1))\cdot\ldots\cdot(n+1)}\frac{(n+d(d+1))!}{\prod_{i=1}^d(\lambda_i+d+1)!}\\ &\geq&\frac{1}{(2n)^{2d^2}}\frac{(n+d(d+1))!}{\prod_{i=1}^d(\lambda_i+d+1)!}\\ &\geq&\frac{1}{(2n)^{2d^2}}\frac{1}{(2n)^d}2^{(n+d(d+1))H(\overline{\lambda+d+1})}\\ &\geq&\frac{1}{(2n)^{3d^2}}2^{nH(\overline{\lambda+d+1})}\\ &\geq&\frac{1}{(2n)^{5d^2}}2^{n(H(\overline{\lambda})+\frac{d^6}{n}\log\frac{d^5}{n})}\\ &=&2^{n(H(\overline{\lambda})+\frac{d^6}{n}\log\frac{d^5}{n}-\frac{5d^2}{n}\log(2n))}\label{eqn5}. \end{eqnarray} Set $\beta_1(n):=-\frac{d^6}{n}\log\frac{d^5}{n}+\frac{5d^2}{n}\log(2n)$, then there is an $N_1\in\nn$ such that for all $n\geq N_1$ we have \begin{align} \dim F_\lambda\geq2^{n(H(\overline\lambda)-\beta_1(n))}.\label{lower-bound-on-dimF} \end{align} An important step in the application of the representation theory of the symmetric group to quantum information theory was the following theorem: \begin{theorem}[\cite{thm2}]\label{theorem:keyl-werner} For $\lambda\in \bbmY_{d,n}$ and $\sigma\in\cs(\hr)$ ($\dim\hr=d$) with spectrum $s$ it holds \begin{align}
 \tr\{P_\lambda\sigma^{\otimes n}\}\leq(n+1)^{d(d-1)/2}2^{-nD(\overline\lambda||s)}.
\end{align}
\end{theorem}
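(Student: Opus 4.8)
The plan is to reduce the trace, via Schur--Weyl duality, to a product of a symmetric-group dimension and a Schur polynomial, and then estimate the two factors separately.

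First I would use Schur--Weyl duality to write $(\cc^d)^{\otimes n}\cong\bigoplus_{\lambda\in YF_{d,n}}F_\lambda\otimes V_\lambda$, where $S_n$ acts irreducibly on $F_\lambda$ and $GL(d)$ acts (polynomially) on the Weyl module $V_\lambda$. Under this isomorphism $P_\lambda$ is the identity on the block $F_\lambda\otimes V_\lambda$ and $0$ on the others, while $\sigma^{\otimes n}$ acts as $\idn_{F_\lambda}\otimes\pi_\lambda(\sigma)$, with $\pi_\lambda$ the representation of $GL(d)$ on $V_\lambda$ extended as a polynomial map to all of $M_d(\cc)$ so that it makes sense for the possibly singular $\sigma$; equivalently one first diagonalizes $\sigma=U\,\mathrm{diag}(s)\,U^\dagger$ and uses that $P_\lambda$ commutes with $U^{\otimes n}$ to replace $\sigma$ by $\mathrm{diag}(s)$. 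Taking the trace block by block yields
\begin{align}
\tr\{P_\lambda\sigma^{\otimes n}\}=\dim F_\lambda\cdot s_\lambda(s),
\end{align}
where $s_\lambda(s)=\sum_T\prod_{i=1}^d s(i)^{c_i(T)}$ is the Schur polynomial evaluated at the decreasingly ordered eigenvalues $s$ of $\sigma$, the sum running over semistandard Young tableaux $T$ of shape $\lambda$ with entries in $[d]$ and $c_i(T)$ the number of entries of $T$ equal to $i$.

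Next I would bound $s_\lambda(s)$ by a single monomial. The combinatorial input is that for every semistandard $T$ of shape $\lambda$ the content $(c_1(T),\ldots,c_d(T))$ is dominated by $\lambda$: the cells of $T$ with entry $\le k$ form a sub-diagram whose columns have length $\le k$, hence at most $k$ rows, so $\sum_{i\le k}c_i(T)\le\sum_{i\le k}\lambda_i$ for every $k$. Since $s(1)\ge\ldots\ge s(d)$, an Abel-summation (rearrangement) argument gives $\prod_i s(i)^{c_i(T)}\le\prod_i s(i)^{\lambda_i}=:s^\lambda$, and summing over the $\dim V_\lambda$ tableaux yields $s_\lambda(s)\le\dim V_\lambda\cdot s^\lambda$. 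The Weyl dimension formula together with the elementary estimate $\frac{\lambda_i-\lambda_j+j-i}{j-i}\le 1+(\lambda_i-\lambda_j)\le n+1$ for each of the $d(d-1)/2$ pairs $i<j$ gives $\dim V_\lambda\le(n+1)^{d(d-1)/2}$.

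Finally, combining the displayed identity with $\dim F_\lambda\le 2^{nH(\overline\lambda)}$ (equation (\ref{eqn4})) gives $\tr\{P_\lambda\sigma^{\otimes n}\}\le(n+1)^{d(d-1)/2}\,2^{nH(\overline\lambda)}\,s^\lambda$, and it only remains to observe that $s^\lambda=2^{n\sum_i\overline\lambda(i)\log s(i)}$, so that $2^{nH(\overline\lambda)}s^\lambda=2^{-n\sum_i\overline\lambda(i)\log(\overline\lambda(i)/s(i))}=2^{-nD(\overline\lambda\|s)}$, both sides being $0$ when $D(\overline\lambda\|s)=\infty$ (some $s(i)=0<\overline\lambda(i)$ forces a vanishing factor in $s^\lambda$), in accordance with the stated convention. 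The main obstacle I anticipate is expository rather than mathematical: pinning down the Schur--Weyl identity $\tr\{P_\lambda\sigma^{\otimes n}\}=\dim F_\lambda\,s_\lambda(s)$ rigorously for a general density operator, and cleanly justifying the dominance-plus-rearrangement step; the rest is bookkeeping using (\ref{eqn1}), (\ref{eqn2}) and the Weyl dimension formula.
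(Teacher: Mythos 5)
Your proposal is mathematically sound, but note that the paper itself offers no proof of this statement: Theorem \ref{theorem:keyl-werner} is imported from the literature (the reference \cite{thm2} only recounts its history through \cite{rudnicki-alicki-sadowsky}, \cite{keyl-werner}, \cite{hayashi-matsumoto} and \cite{christandl-mitchison}). What you have written is essentially the standard short proof found in those cited sources: Schur--Weyl duality gives $\tr\{P_\lambda\sigma^{\otimes n}\}=\dim F_\lambda\cdot s_\lambda(s)$, the highest-weight (dominance) bound gives $s_\lambda(s)\leq\dim V_\lambda\, \prod_i s(i)^{\lambda_i}$, the Weyl dimension formula gives $\dim V_\lambda\leq(n+1)^{d(d-1)/2}$, and $\dim F_\lambda\leq2^{nH(\overline\lambda)}$ (the paper's inequality (\ref{eqn4})) converts the product into $2^{-nD(\overline\lambda\|s)}$. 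All four steps check out: the sub-diagram argument for dominance of the content, the Abel-summation comparison of monomials for decreasingly ordered $s$, and the factorwise estimate $\frac{\lambda_i-\lambda_j+j-i}{j-i}\leq n+1$ are all correct. The only place needing a little extra care is the degenerate case $s^\lambda=0$ (i.e.\ $D(\overline\lambda\|s)=\infty$): the rearrangement inequality as literally stated involves $\log s(i)=-\infty$, but this is easily patched either by your column-length observation (any semistandard tableau of a shape with more rows than $\mathrm{rank}\,\sigma$ must use an entry outside the support of $s$, so every monomial vanishes) or by continuity from strictly positive spectra; you flagged this yourself, so it is an expository loose end rather than a gap. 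In short: a correct, self-contained proof of a result the paper merely cites, following the same route as the cited shortened proofs.
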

We are now ready to prove our main theorem:
\begin{proof}[Proof of Theorem \ref{thm:thetheorem}]
Let $n\in\nn$ and $\Phi_{AB}^n$ as defined in (\ref{eqn:fundamental-definition}). In order to prove inequality (\ref{eqn-thm:thetheorem-1}) we will employ the gentle-measurement Lemma from \cite{winter} in the version given in \cite{wilde-book}. We first prove the following lower bound:
\begin{align}
\varpi(\epsilon,n)&:=\tr\{(P_\epsilon^A\otimes P_\epsilon^B)P_\epsilon^{AB}\rho_{AB}^{\otimes n}\}\\
&=1-\sum_{\mu\in\mathfrak C_\eps^n(r_A)}\sum_{\nu\in\mathfrak C_\eps^n(r_B)}\sum_{\lambda\in\mathfrak C_\eps^n(r_{AB})}\tr\{P_\mu\otimes P_\nu)P_\lambda\rho_{AB}^{\otimes n}\}\\
&\geq1-(n+1)^{d_A^4d_B^4}2^{-n\tfrac{1}{2\ln2}\epsilon^2}\\
&=1-2^{-n(\tfrac{1}{2\ln2}\epsilon^2-\beta_2(n))},
\end{align}
with $\beta_2(n):=\tfrac{4}{n}\log(d_Ad_Bn)$. Then the gentle-measurement lemma implies that
\begin{align}
\|\Phi_{AB}^n-\rho_{AB}^{\otimes n}\|_1\leq2\sqrt{\varpi(\epsilon,n)}.
\end{align}
Thus, every choice of function $\gamma$ satisfying $\gamma(\epsilon,n)\geq \tfrac{1}{2}\beta_2(n)$ for all $n\geq N(\epsilon)$ is good enough to prove inequality (\ref{eqn-thm:thetheorem-1}). We will now derive further such lower bounds on $\gamma$ and later add all these lower bounds together in order to define $\gamma$. We proceed to inequality (\ref{eqn-thm:thetheorem-4}): Let $k\in\nn$, then
\begin{align}
\tr\{\tr_B\{\Phi_{AB}^n\}^k\}&=\varpi(\epsilon,n)^k\tr\{\tr_B\{(P_\epsilon^A\otimes P_\epsilon^B)P_\epsilon^{AB}\rho_{AB}^{\otimes n}(P_\epsilon^A\otimes P_\epsilon^B)\}^k\}\\
&\leq\varpi(\epsilon,n)^k\tr\{\tr_B\{(P_\epsilon^A\otimes P_\epsilon^B)\rho_{AB}^{\otimes n}(P_\epsilon^A\otimes P_\epsilon^B)\}^k\}\\
&\leq\varpi(\epsilon,n)^k\tr\{\tr_B\{(P_\epsilon^A\otimes \eins_{\hr_B}^{\otimes n})\rho_{AB}^{\otimes n}(P_\epsilon^A\otimes \eins_{\hr_B}^{\otimes n})\}^k\}\\
&=\varpi(\epsilon,n)^k\tr\{P_\epsilon^A(\rho_{A}^k)^{\otimes n}\},
\end{align}
where the first inequality follows since $P_\eps^{AB}$ commutes both with $P_\eps^A\otimes P_\eps^B$ and with $\rho_{AB}^{\otimes n}$ and we have $\tr\{XY\}\leq\tr\{Y\}$ for all $X,Y\in\mathcal B(\kr)$ whenever $0\leq X\leq\eins_\kr$ and $0\leq Y$ hold true (and $\kr$ is a finite-dimensional Hilbert space), while the second is a consequence of the inequality $\tr_{\kr_B}\{(\eins_{\kr_A}\otimes P^B)X_{AB}(\eins_{\kr_A}\otimes P^B\}\leq\tr_B\{X_{AB}\}$, which is valid for any nonnegative operator $X_{AB}$ on a composite system $\kr_A\otimes\kr_B$ (where both $\kr_A$ and $\kr_B$ are finite-dimensional Hilbert spaces) and projection $P^B\in\mathcal B(\kr_B)$.\\
It makes sense to treat the term $\tr\{P_\epsilon^A(\rho_{A}^k)^{\otimes n}\}$ separately: Let $Q^A_t$ be the frequency-typical subspaces corresponding to $\rho_A$, as defined in subsection \ref{subsec:Reformulation of the conjecture and initial approaches}. It is known that $P_\lambda Q_t=0$ whenever $t$ is not majorized by $\lambda$ (see \cite[Lemma 1.21]{christandl} and also the proof of Theorem 2.14 therein), so that
\begin{align}
\tr\{P_\epsilon^A(\rho_{A}^k)^{\otimes n}\}&=\varpi(\epsilon,n)^k\sum_t\prod_{i=1}^nr_A(i)^{k\cdot f(i)}\tr\{P_\epsilon Q^A_t\}\label{eqn:basic-estimates-start}\\
&=\varpi(\epsilon,n)^k\sum_t\prod_{i=1}^nr_A(i)^{k\cdot f(i)}\tr\{P_\epsilon Q^A_t\}\\
&=\varpi(\epsilon,n)^k\sum_{\mu\in\mathfrak C_\eps^n(r_A)}\sum_t\prod_{i=1}^nr_A(i)^{k\cdot f(i)}\tr\{P_\mu Q^A_t\}\\
&=\varpi(\epsilon,n)^k\sum_{\mu\in\mathfrak C_\eps^n(r_A)}\sum_{t\preceq\lambda}\prod_{i=1}^nr_A(i)^{k\cdot f(i)}\tr\{P_\mu Q^A_t\}\\
&\leq\varpi(\epsilon,n)^k(n+1)^{2d_A^2}\sum_{\mu\in\mathfrak C_\eps^n(r_A)\|\leq\epsilon}\sum_{t\preceq\lambda}\prod_{i=1}^nr_A(i)^{k\cdot f(i)}\dim(F_\mu)\\
&\leq\varpi(\epsilon,n)^k(n+1)^{2d_A^2}\sum_{\mu\in\mathfrak C_\eps^n(r_A)}\sum_{t\preceq\lambda}\prod_{i=1}^nr_A(i)^{k\cdot \lambda(i)}\dim(F_\mu)\\
&=\varpi(\epsilon,n)^k(n+1)^{2d_A^2}\sum_{\mu\in\mathfrak C_\eps^n(r_A)}\sum_{t\preceq\lambda}2^{n\cdot(\sum_{i=1}^{d_A}k\cdot \bar\lambda_i\log r_A(i)+H(\bar\mu))}.\label{eqn:basic-estimates-final}
\end{align}
Given any state $\rho\in\cs(\kr)$ for some Hilbert space $\kr$, let $s_{\min}(\rho)$ denote its smallest nonzero eigenvalue. Set $c':=\min\{s_{\min}(\rho_A),s_{\min}(\rho_B),s_{\min}(\rho_{AB})\}$. Then any of the terms in above sum can be upper bounded as follows:
\begin{align}\label{eq:entropy-estimate}
\sum_{i=1}^{d_A}k\cdot \bar\lambda_i\log r_A(i)+H(\bar\mu)&\leq (1-k)\cdot H(r_A)+k\cdot\eps\cdot(|\log c'|+\log(\eps\cdot d^{-1})),
\end{align}
where $\|\bar\lambda-r_A\|_1\leq\eps$ was used in combination with Lemma \ref{lemma1}. It follows that
\begin{align}\label{eqn:s_min}
\tr\{P_\epsilon^A(\rho_{A}^k)^{\otimes n}\}&\leq\varpi(\epsilon,n)^k(n+1)^{6d_A^2}2^{-n\cdot((k-1)H(r_A)+k\cdot\eps\cdot(|\log c'|+\log(\eps\cdot d^{-1}))},
\end{align}
and with the appropriate and obvious choice of $\gamma(\eps,n)$ (by assumption, $k\geq2$ so that $1\leq k/(k-1)\leq2$ holds. Thus $\gamma$ depends on $k$ only through the term $\frac{1}{n}\log\varpi(\epsilon,n)^k$) the claim follows.\\
The discussion can now be continued in the same manner to derive the estimate (\ref{eqn-thm:thetheorem-9}), where again the function $\gamma$ gets updated such that it gives an upper bound on the sum of all its predecessors.\\
Finally, the inequality (\ref{eqn-thm:thetheorem-1}) is the easiest to prove since it only requires one to verify the estimate
\begin{align}
\tr\{(\Phi_{AB}^n)^k\}\leq\varpi(\epsilon,n)^k\tr\{P_\eps^{AB}(\rho_{AB}^k)^{\otimes n}\},
\end{align}
which is a consequence of the inequality $XYX^\dag\leq XX^\dag$ that holds true whenever $0\leq Y\leq\eins$. After using a reasoning along the lines of inequalities (\ref{eqn:basic-estimates-start}) until (\ref{eqn:basic-estimates-final}) for $\rho_{AB}$ instead of $\rho_A$, one proceeds with the inequality (\ref{eq:entropy-estimate}) and uses (\ref{eqn:s_min}) where one sets $r_{AB}$ in place of $r_A$. The proof is finally finished by adding all the sub-exponential correction terms to form the function $\gamma$.
\end{proof}
\begin{proof}[Proof of Theorem \ref{thm:additionaltheorem}] While it may seem that achieving a lower bound like the one we are aiming at is a trivial thing, this is in fact not the case here due to the multiparty nature of the problem. Clearly, $\lambda\in\mathfrak C_\eps^n(r_{AB})$ implies that $\lambda\approx n\cdot r_{AB}$ for at least one $\lambda$ once $n$ is large enough. However, there is an additional constraint on those representations of $S_n$ that appear in the support of $P_\lambda^{AB}(P_\mu^A\otimes P_\nu^B)$, and for that reason the proof becomes a nontrivial extension of what is known already. The obvious approach would certainly be to deduce that there are projections $P_{\lambda,i}^{AB}$ ($i=1,2,\ldots,m$ for some number $m$ that may be strictly larger than one) such that $\Sigma:=\sum_iP_{\lambda,i}^{AB}= P_\lambda^{AB}(P_\mu^A\otimes P_\nu^B)$. However, it is not clear that $\Sigma\rho_{AB}^{\otimes n}=\rho_{AB}^{\otimes n}\Sigma$. Also, bounds like $(P_{\lambda,1}+P_{\lambda,2})\rho_{AB}^{\otimes n}(P_{\lambda,1}+P_{\lambda,2})\geq P_{\lambda,1}\rho_{AB}^{\otimes n}P_{\lambda,1}+P_{\lambda,2}\rho_{AB}^{\otimes n}P_{\lambda,2}$ are not valid in general (this pinching inequality actually holds in the reverse direction, with equality holding for example if $P_{\lambda,1}\rho_{AB}^{\otimes n}P_{\lambda,2}=0$) and therefore calculation of the lower bound becomes less straightforward than expected.\\
Our route to approach this problem is to first derive bounds on quantities $\tr\{P_{\lambda,i_1}\rho_{AB}^{\otimes n}\cdot\ldots\cdot P_{\lambda,i_k}\rho_{AB}^{\otimes n}\}$ when one of the $P_{\lambda,i}\leq P_\mu\otimes P_\nu$ for a pair of Young frames $(\mu,\nu)$ not being close to the pair $(r_A,r_B)$. This is the content of the following Lemma:
\begin{lemma}\label{lemma-2} Let $\rho_{AB}\in\cs(\hr_A\otimes\hr_B)$ be a quantum state with spectrum $r_{AB}$ and marginals having spectra $r_A$ and $r_B$. Let $\mu\in \bbmY_{d_A,n}$, $\nu\in \bbmY_{d_B,n}$, $\lambda\in \bbmY_{d_Bd_A,n}$ be Young frames. Let $k\in\nn$.

If $\|\bar\mu-r_A\|_1>\delta$, then for $P_{\lambda,i_1},\ldots,P_{\lambda,i_k}$ ($i_1,\ldots,i_k\in[m_\lambda^{AB}])$ with at least one of the indices (let this be $i_x$) satisfying
$P_{\lambda,i_x}\leq P_\mu\otimes P_\nu$ and the others $P_{\lambda,i'}\leq P_{\mu'}\otimes P_{\nu'}$ for arbitrary other Young frames $\mu'\in\bbmY_{d_A,n}$, $\nu'\in\bbmY_{d_B,n}$ it holds
\begin{align}
|\tr\{P_{\lambda,i_1}\rho_{AB}^{\otimes n}\cdot\ldots\cdot P_{\lambda,i_k}\rho_{AB}^{\otimes n}\}|\leq2^{-nc\delta^2}2^{-n(k-1)(D(\bar\lambda\|r_{AB})+H(\bar\lambda)-\beta_3(n))}.
\end{align}
The function $\beta_3$ is given by $\beta_3(n):=\frac{(d_Ad_B)^2}{n}\log(2n)+\beta_1(n)$.
\end{lemma}
\begin{remark} The lemma can w.l.o.g. be read with the roles of $A$ and $B$ interchanged.
\end{remark}
\begin{proof}[Proof of Lemma \ref{lemma-2}] We consider the first statement first. Let us take a look at $P_\lambda\rho_{AB}^{\otimes n}$ first. Observe that the two operators in this product commute. Since $P_\lambda\rho_{AB}^{\otimes n}$ is invariant under permutations, we can write it as \begin{align} P_\lambda\rho_{AB}^{\otimes n}=\sum_{i,j=1}^{m_{\lambda}}c_{ij}Y_{ij}, \end{align} where the operators $Y_{ij}\in\mathcal B(\hr_{AB}^{\otimes l})$ satisfy \begin{align} P_{\lambda,i}Y_{ij}P_{\lambda,j}=Y_{ij},\qquad Y_{ij}Y_{kl}=\delta(j,k)Y_{il}\qquad\mathrm{and}\qquad Y_{jj}=P_{\lambda,j}. \end{align} Since $P_\lambda\rho_{AB}^{\otimes n}$ is self-adjoint, we get \begin{align}
 \sum_{i,j=1}^{m_{\lambda}}c_{ij}Y_{ij}&=\sum_{i,j=1}^{m_{\lambda}}\bar c_{ij} Y_{ij}^\dag\\
&=\sum_{i,j=1}^{m_{\lambda}}\bar c_{ij} Y_{ji}, \end{align} from which it follows that $\bar c_{ij}=c_{ji}$. Also, for every $i,j\in[m^{AB}_\lambda]$ we know that \begin{align} c_{ii}Y_{ii}+c_{ij}Y_{ij}+c_{ji}Y_{ji}+c_{jj}Y_{jj}=(P_{\lambda,i}+P_{\lambda,j})\rho_{AB}^{\otimes n}(P_{\lambda,i}+P_{\lambda,j})\geq0. \end{align} By choosing appropriate bases for $\supp(P_{\lambda,i})$ and $\supp(P_{\lambda,j})$, this translates to the statement \begin{align} \left(\begin{array}{ll} c_{ii} & c_{ij}\\ c_{ji} & c_{jj}\end{array}\right)\geq 0. \end{align} This now shows us that $|c_{ij}|^2\leq|c_{ii}|\cdot|c_{jj}|$ has to hold and that all the $c_{ii}$, $i=1,\ldots,[m^{AB}_\lambda]$, are nonnegative real numbers. We now prove the promised inequality:
\begin{align}
|\tr\{P_{\lambda,i_1}\rho_{AB}^{\otimes n}\cdot\ldots\cdot P_{\lambda,i_k}\rho_{AB}^{\otimes n}\}|
&=|\sum_{j_1,l_1=1}^{m^{AB}_\lambda}c_{j_1l_1}\cdot\ldots\cdot\sum_{j_k,l_k=1}^{m^{AB}_\lambda}c_{j_kl_k}\tr\{P_{\lambda,i_1}Y_{j_1l_1}\cdot\ldots\cdot P_{\lambda,i_k}Y_{j_kl_k}\}|\\
&=|\prod_{j=1}^kc_{i_ji_{j+1}}\dim(F_\lambda)|\label{eqn:k-state-start}\\
&=\prod_{j=1}^k|c_{i_ji_{j+1}}|\dim(F_\lambda)\\
&\leq\prod_{j=1}^k\sqrt{|c_{i_ji_{j}}|\cdot|c_{i_{j+1}i_{j+1}}|}\dim(F_\lambda)\\
&=\prod_{j=1}^k|c_{i_ji_{j}}|\dim(F_\lambda).\label{eqn:k-state-end}
\end{align}
Observe that
\begin{align} |c_{ii}|&=c_{ii}\\ &=\tr\{Y_{ii}\sum_{k,l}c_{kl}Y_{kl}\}/\dim F_\lambda\\ &=\tr\{P_{\lambda,i}P_\lambda\rho_{AB}^{\otimes n}\}/\dim F_\lambda\\
&=\tr\{P_{\lambda,i}\rho_{AB}^{\otimes n}\}/\dim F_\lambda,
\end{align}
so together with equations (\ref{eqn:k-state-start}) to (\ref{eqn:k-state-end}) we can combine this to get
\begin{align}
|\tr\{P_{\lambda,i_1}\rho_{AB}^{\otimes n}\cdot\ldots\cdot P_{\lambda,i_k}\rho_{AB}^{\otimes n}\}|&= \dim(F_\lambda)^{-k}\tr\{P_{\lambda,i_1}\rho_{AB}^{\otimes n}\}\cdot\ldots\cdot\tr\{P_{\lambda,i_k}\rho_{AB}^{\otimes n}\}\dim(F_\lambda)\label{eqn6}\\
(\mathrm{by\ assumption})\qquad&\leq\tr\{(P_\mu\otimes P_\nu)\rho_{AB}^{\otimes n}\}\tr\{P_\lambda\rho_{AB}^{\otimes n}\}^{k-1}\dim(F_\lambda)^{1-k}\\
(\mathrm{since\ }P_\nu\leq\eins_{\hr_{B}^{\otimes n}})\qquad&\leq\tr\{P_\mu\rho_A^{\otimes n}\}\tr\{P_\lambda\rho_{AB}^{\otimes n}\}^{k-1}\dim(F_\lambda)^{1-k}\\
(\mathrm{Theorem\ \ref{theorem:keyl-werner},\ inequality\ (\ref{lower-bound-on-dimF})})\ \ &\leq(2n)^{2kd_A^2d_B^2}2^{-nD(\bar\mu||r_A)}2^{-n(k-1)(D(\bar\lambda\|r_{AB})+H(\bar\lambda)-\beta_1(n))}\\
(\mathrm{Pinsker's\ inequality})\qquad&\leq(2n)^{2kd_A^2d_B^2}2^{-nc\delta^2}2^{-n(k-1)(D(\bar\lambda\|r_{AB})+H(\bar\lambda)-\beta_1(n))}\\
&\leq2^{-nc\delta^2}2^{-n(k-1)(D(\bar\lambda\|r_{AB})+H(\bar\lambda)-\beta_3(n))}.
\end{align}
\end{proof}
\begin{lemma}\label{lemma-3} For a type $N(\cdot)$ on $[d_Ad_B]^n$ and its corresponding typeclass $T_N\subset[d_Ad_B]^n$ and $e_1,\ldots,e_{d_Ad_B}$ a basis in which $\rho_{AB}$ is diagonal, let $\hr_N:=\textup{span}(\{e_{x_1}\otimes\ldots\otimes e_{x_n}:x^n\in T_N\})$. Denote the projection onto $\hr_N$ by $p_{\hr_N}$.\\
Then $\hr_N$ is invariant under the action $\mathbb B^{AB}$ of $S_n$ and for every $\lambda$ with $\lambda=N^\downarrow$, $P_\lambda P_N\neq0$.
\end{lemma}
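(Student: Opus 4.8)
I would prove the lemma by recognising $\hr_t$ as a permutation module for the $S_n$-action $\mathbb B^{AB}$ and then reading off the multiplicity of $F_\lambda$ from the Kostka numbers. The invariance of $\hr_t$ is immediate: $\mathbb B^{AB}(\sigma)$ only permutes the $n$ tensor slots, sending a product basis vector $e_{x_1}\otimes\cdots\otimes e_{x_n}$ to $e_{x_{\sigma^{-1}(1)}}\otimes\cdots\otimes e_{x_{\sigma^{-1}(n)}}$; the index string $(x_{\sigma^{-1}(1)},\ldots,x_{\sigma^{-1}(n)})$ is a rearrangement of $(x_1,\ldots,x_n)$ and hence has the same type $N(\cdot)$, so the image still lies in $t$. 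Thus $\mathbb B^{AB}(\sigma)$ maps the spanning set of $\hr_t$ into itself, which is the asserted invariance. (Diagonality of $\rho_{AB}$ in the basis $e_1,\dots,e_{d_Ad_B}$ plays no role in this lemma; it only matters when $\hr_t$ is used later.)

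For the second claim, observe that $S_n$ acts transitively on the type class $t$, since any two strings of type $N$ differ by a permutation of positions. Fixing the sorted string $x^0:=(1,\dots,1,2,\dots,2,\dots)$ with $N(i)$ copies of $i$, its stabiliser in $S_n$ is precisely the Young subgroup $S_N:=S_{N(1)}\times\cdots\times S_{N(d_Ad_B)}$ permuting positions within the blocks of constant symbol. Therefore $\hr_t$, as an $S_n$-module, is isomorphic to $\cc[S_n/S_N]\cong\mathrm{Ind}_{S_N}^{S_n}\mathbf 1=:M^N$, the permutation module of the composition $N$; and since $S_N$ is conjugate in $S_n$ to the Young subgroup $S_\lambda$ of the partition $\lambda=N^\downarrow$, one has $M^N\cong M^\lambda$.

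By Young's rule, $M^N\cong\bigoplus_{\lambda'}K_{\lambda',N}\,F_{\lambda'}$, with $K_{\lambda',N}$ the Kostka number counting semistandard tableaux of shape $\lambda'$ and content $N$ (equivalently, by Frobenius reciprocity, $K_{\lambda',N}=\dim(F_{\lambda'})^{S_N}$). Kostka numbers depend only on the partition underlying the content, so $K_{\lambda,N}=K_{\lambda,\lambda}$, and $K_{\lambda,\lambda}=1$: in a column-strict filling the cell in row $i$ of any column has entry $\ge i$, so the cells with entry $\le i$ are confined to the first $i$ rows; since in a content-$\lambda$ tableau there are $\lambda_1+\cdots+\lambda_i$ of them and the first $i$ rows hold exactly that many cells, row $i$ must consist entirely of $i$'s, so the tableau is unique. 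Hence $F_\lambda$ occurs in $\hr_t\cong M^N$ with multiplicity one, in particular at least once. (One may also avoid Young's rule and exhibit the copy by hand: identifying $\lambda$-tabloids with product basis vectors of type $\lambda$ realises the Specht module $S^\lambda\cong F_\lambda$, the span of polytabloids $\sum_{\tau\in C_T}\sgn(\tau)\,\{T\tau\}$, as a submodule of $M^\lambda\cong M^N$.)

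It remains to turn the copy $W\subseteq\hr_t$ of $F_\lambda$ into one of the projections $P'_{\lambda,i}$. Since $\mathbb B^{AB}$ is a unitary representation, the orthogonal complement of $W$ inside the $\lambda$-isotypic component of $(\hr_A\otimes\hr_B)^{\otimes n}$ is again $\mathbb B^{AB}$-invariant and, lying in that isotypic component, decomposes into $m_\lambda^{AB}-1$ further copies of $F_\lambda$; choosing an orthogonal such decomposition and adjoining $W$ gives an orthogonal family $P'_{\lambda,1},\dots,P'_{\lambda,m_\lambda^{AB}}$ of projections onto copies of $F_\lambda$ with, say, $P'_{\lambda,1}$ the projection onto $W$, so that $P'_{\lambda,1}\le p_{\hr_t}$. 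There is no hard estimate anywhere in the lemma; the whole content is the representation-theoretic identity $K_{\lambda,N^\downarrow}=1$ — equivalently, that the permutation module $M^N$ contains $F_{N^\downarrow}$ — and I expect the only point requiring care to be the identification of $\hr_t$ with $M^N$, i.e.\ pinning the stabiliser down as a Young subgroup, after which the Kostka count and the unitary completion to an orthogonal family are routine.
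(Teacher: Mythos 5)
Your proof is correct, but its main route differs from the paper's. You identify $\hr_t$ with the permutation module $\cc[S_n/S_N]\cong M^N$ (via transitivity of $S_n$ on the type class and the stabiliser of the sorted string being the Young subgroup) and then invoke Young's rule together with the elementary count $K_{\lambda,N^\downarrow}=K_{\lambda,\lambda}=1$, which gives the stronger statement that $F_\lambda$ occurs in $\hr_t$ with multiplicity exactly one. The paper never mentions Kostka numbers: it takes the sorted vector $v=\otimes_i e_i^{\otimes N(i)}\in\hr_t$, applies the signed column-group/row-group element $E_T$ for the standard tableau $T$ whose rows match the constant-symbol blocks of $v$, computes $\langle v,\mathbb B(E_T)v\rangle=|R_T|\neq0$, notes that $\mathbb B(E_T)v$ lies in $\supp(P_\lambda)$, and concludes that the invariant subspace $\hr_t$ cannot be orthogonal to the $\lambda$-isotypic component, hence must contain a copy of $F_\lambda$; your parenthetical alternative (the Specht module realised inside $M^\lambda$ by polytabloids) is essentially this argument restated. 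The trade-off: your route leans on standard machinery (Young's rule, Frobenius reciprocity) and yields the exact multiplicity, which is more than the lemma needs, while the paper's computation is self-contained and only needs a single nonzero overlap. Both treatments handle invariance of $\hr_t$ the same trivial way (permutations preserve the type), and you are right that diagonality of $\rho_{AB}$ plays no role here — it is used only in Lemma \ref{lemma-4}. Your closing step of completing $W$ to an orthogonal family of projections is harmless but unnecessary: by the paper's own remark, $P'_{\lambda,i}$ is just some projection onto a copy of $F_\lambda$ in $\hr^{\otimes n}$, so exhibiting one copy inside $\hr_t$ already finishes the proof.
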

\begin{proof} To a given $N(\cdot)$, take $T$ to be the standard tableaux for $\lambda=N^\downarrow$ which has entries $T_{1i}=i,\ T_{2i}=\lambda_1+i$ and so on, until finally $T_{\lambda_{d_Ad_B}i}=\lambda_1+\ldots\lambda_{n-1}+i$. Let $v=\otimes_{i=1}^{d_Ad_B}e_i^{\otimes N(i)}$. Denote the set of row permutations belonging to $T$ by $R_T$, the column permutations by $C_T$ and set $E_T:=\{\pi\circ\tau:\pi\in C_T,\tau\in R_T\}$. Note that $V_\lambda:=\textrm{span}(\{E(T)v:v\in\hr^{\otimes n},\ T-\textrm{standard\ tableaux\ for\ }\lambda\})$ is the isotypical vectorspace belonging to $\lambda$ - it holds $\supp(P_\lambda)=V_\lambda$.\\ We calculate the overlap of $v$ with a suitably chosen element of $V_\lambda$: \begin{align} \langle v,\mathbb B(E_T)v\rangle&=\sum_{\pi\in C_T}\sgn(\pi)\sum_{\tau\in R_T}\langle \mathbb B(\pi)\mathbb B(\tau)v,v\rangle\\ &=|R_T|\sum_{\pi\in C_T}\sgn(\pi)\langle \mathbb B(\pi)v,v\rangle\\ &=|R_T|, \end{align} since $\langle \mathbb B(\pi)v,v\rangle=0$ for every $C_T\ni\pi\neq e$. Now assume that $\hr_t$ contains no irreducible subspace corresponding to $\lambda$. Then, of course, for every vector $w\in\supp(P_\lambda)$ we have $w\perp\hr_t$. But by the preceding, the vector $w:=\mathbb B(E_T)v\in\supp(P_\lambda)$ is not perpendicular to $\hr_t$.\\ Thus, there must be at least one copy of $F_\lambda$ in $\hr_t$, which is what we set out to prove. \end{proof}
\begin{lemma}\label{lemma-4} For any $\lambda\in \bbmY_{d_Ad_B,n}$ and $k\in\nn$, it holds that
\begin{align}
\tr\{P_\lambda(\rho_{AB}^k)^{\otimes n}\}\geq2^{-nD(\bar\lambda\|r_{AB})}2^{-n(k-1)(D(\bar\lambda\|r_{AB})+H(\bar\lambda)+\beta_1(n))}2^{n(2-k)H(\bar\lambda)}.
\end{align}
\end{lemma}
Note that, if $D(\bar\lambda\|r_{AB})=\infty$, the right hand side of above inequality equals zero.
\begin{proof}[Proof of Lemma \ref{lemma-4}] By Lemma \ref{lemma-3}, for the subspace $\hr_t$ defined by the typeclass corresponding to $\lambda$, we have $P_{\lambda,i}\leq p_{\hr_t}$ for at least one $i\in[m^{AB}_\lambda]$. Also, $\langle v,(\rho_{AB}^k)^{\otimes n}v\rangle =\prod_{j=1}^{d_Ad_B}r_j^{kt(j)}$ for every $v\in\hr_t$, and moreover it holds $p_{\hr_t}\rho_{AB}^{\otimes n}=\rho_{AB}^{\otimes n}p_{\hr_t}$. Thus,
\begin{align}
\tr\{(P_\lambda\rho_{AB}^{\otimes n})^k\}&=\tr\{P_\lambda(\rho_{AB}^k)^{\otimes n}\}\\
&\geq \tr\{P_{\lambda}p_{\hr_t}(\rho_{AB}^k)^{\otimes n}\}\\
&=\prod_{j=1}^{d_Ad_B}r_{AB}(j)^{k\lambda_j}\tr\{P_{\lambda}p_{\hr_t}\}\\
&\geq2^{nk\sum_{j=1}^{d_Ad_B}\overline\lambda_j\log r_{AB}(j)}\dim(F_\lambda)\\
&=2^{nk\sum_{j=1}^{d_Ad_B}\overline\lambda_j\log r_{AB}(j)}\dim(F_\lambda)^{k}\dim(F_\lambda)^{1-k}\\
&\geq2^{-nD(\bar\lambda\|r_{AB})}2^{-n(k-1)(D(\bar\lambda\|r_{AB})+H(\bar\lambda)+\beta_1(n))}.
\end{align}
\end{proof}
We are now finally coming to the derivation of the lower bound (\ref{eqn:additionaltheorem-1}) on $\tr\{(\Phi_{AB}^n)^k\}$. Our approach is to compare quantities of the form $\tr\{(P_\lambda)\rho_{AB}^{\otimes n})^k\}$ for which we know a lower bound from Lemma \ref{lemma-4} with quantities of the form $\tr\{((P_\mu\otimes P_\nu)P_\lambda\rho_{AB}^{\otimes n}(P_\mu\otimes P_\nu))^k\}$, for which a lower bound seems hard to get at least at first sight.\\
Let $\lambda\in\mathfrak C_\eps^n(r_{AB})$, and let $E\subset[m_\lambda^{AB}]$ denote the set of indices such that for all $i\in E$ we have $P_{\lambda,i}(P_\mu\otimes P_\nu)=0$ whenever $\|r_A-\bar\mu\|_1\leq\eps$ or $\|r_B-\bar\nu\|_1\leq\eps$. We further define $D:=[m_\lambda^{AB}]\backslash E$, the complement of $E$ within $[m_\lambda^{AB}]$. It then holds that $P_\lambda(P_\eps^A\otimes P_\eps^B)=\sum_{i\in D}P_{\lambda,i}$. Define $P_D:=\sum_{i\in D}P_{\lambda,i}$ and $P_E:=\sum_{i\in E}P_{\lambda,i}$ and note that $P_D+P_E=P_\lambda$. Then, with $\bX:=\{E,D\}$ an alphabet we can define for each $\lambda\in\bbmY_{d_Ad_B,n}$ a function $f:\bX^k\to\mathbb C$ by
\begin{align}
f_\lambda(x^k):=\tr\{P_{x_1}\rho_{AB}^{\otimes n}\cdot\ldots\cdot P_{x_k}\rho_{AB}^{\otimes n}\}.
\end{align}
We are intersted in the derivation of a lower bound on the function value
\begin{align}
f_\lambda((D,\ldots,D))=\tr\{((P_\eps^A\otimes P_\eps^B)P_\lambda^{AB}\rho_{AB}^{\otimes n}(P_\eps^A\otimes P_\eps^B))^k\}.
\end{align}
In order to derive a lower bound on this quantity we write it as
\begin{align}
f_\lambda((D,\ldots,D))&=\sum_{x^k}f(x^k)-\sum_{x^k\neq(D,\ldots,D)}f(x^k)\\
&=\tr\{P_\lambda^{AB}(\rho_{AB}^k)^{\otimes n}\}-\sum_{x^k\neq(D,\ldots,D)}f(x^k)\\
&\geq\tr\{P_\lambda^{AB}(\rho_{AB}^k)^{\otimes n}\}-|\bX|^k\max_{x^k\neq(D,\ldots,D)}|f(x^k)|.
\end{align}
Pick any $x^k\neq(D,\ldots,D)$. Without loss of generality it holds $x_1=E$. In that case, we can write
\begin{align}
f_\lambda(x^k)&=|\tr\{P_{x_1}\rho_{AB}^{\otimes n}\cdot\ldots\cdot P_{x_k}\rho_{AB}^{\otimes n}\}|\\
&\leq (m_\lambda^{AB})^k|\tr\{P_{\lambda,i_1}\rho_{AB}^{\otimes n}\cdot\ldots\cdot P_{\lambda,i_k}\rho_{AB}^{\otimes n}\}|,
\end{align}
where $i_1,\ldots,i_k$ obey $i_1\in E$ and $i_2,\ldots,i_k$ are taken either from $E$ or from $D$. Now, Lemma \ref{lemma-2} can be directly applied - with $P_{\lambda,i_1}\leq P_\mu^A\otimes P_\nu^A$ for some pair $\mu,\nu$ where at least $\|\bar\mu-r_A\|\geq\eps$ or $\|\bar\nu-r_B\|\geq\eps$. This yields
\begin{align}
f_\lambda(x^k)\leq2^{-nc\eps^2}2^{-n(k-1)(D(\bar\lambda\|r_{AB})+H(\bar\lambda)-\beta_3(n))}.
\end{align}
Since on the other hand we know from Lemma \ref{lemma-4} that
\begin{align}
\sum_{x^k\in\bX^k}f_\lambda(x^k)\geq2^{-nD(\bar\lambda\|r_{AB})}2^{-n(k-1)(D(\bar\lambda\|r_{AB})+H(\bar\lambda)+\beta_1(n))}2^{n(2-k)H(\bar\lambda)},
\end{align}
we can conclude that
\begin{align}
f_\lambda((D,\ldots,D))&\geq2^{-nD(\bar\lambda\|r_{AB})}2^{-n(k-1)(D(\bar\lambda\|r_{AB})+H(\bar\lambda)+\beta_1(n))}\\
&\qquad
-2^k2^{-nc\delta^2}2^{-n(k-1)(D(\bar\lambda\|r_{AB})+H(\bar\lambda)-\beta_3(n))}\\
&=2^{-n(k-1)(D(\bar\lambda\|r_{AB})+H(\bar\lambda)-\beta_3(n))}\left(2^{-nD(\bar\lambda\|r_{AB})}-2^k2^{-nc\epsilon^2}\right).
\end{align}
We may now take any sequence $(\lambda^{(n)}_{n\in\nn}$ of Young frames converging to $r_{AB}$ such that $r_{AB}(j)=0$ implies $\lambda_j^{(n)}=0$ for all $j=1,\ldots,d_Ad_B$ and set
\begin{align}
\nu_k(\eps,n)&:=D(\bar\lambda^{(n)}\|r_{AB})+|H(r_{AB})-H(\bar\lambda^{(n)})|\\
&\qquad-\beta_3(n)+\frac{1}{n(k-1)}\log\left(\varsigma(\eps,n)^k(2^{-nD(\bar\lambda^{(n)}\|r_{AB})}-2^k2^{-nc\epsilon^2})\right).
\end{align}
It is clear from e.g. Lemma \ref{lemma1} that $\lim_{n\to\infty}\nu_k(\eps,n)=0$ holds for all $\eps>0$ and $k\geq2$. Since $\rho_{AB}^{\otimes n}$ is permutation-invariant it holds for all choices of $\xi^{(1)},\ldots,\xi^{(k)}\in\bbmY_{d_Ad_B,n}$ and $i_1\in[m_{\xi^{(1)}}^{AB}],\ldots,i_k\in[m_{\xi^{(k)}}^{AB}]$ that
\begin{align}
\tr\{P_{\xi^{(1)},i_1}\rho_{AB}^{\otimes n}\cdot\ldots\cdot P_{\xi^{(k)},i_k}\rho_{AB}^{\otimes n}\}=0
\end{align}
whenever there are $a,b\in[k]$ such that $\xi_a\neq\xi_b$. Thus for all $n\in\nn$ we have
\begin{align}
\tr\{(\Phi_{AB}^n)^k\}&=\sum_{\lambda\in\mathfrak C_{\eps,n}(r_{AB})}f_\lambda((D,\ldots,D))\\
&\geq f_{\lambda^{(n)}}((D,\ldots,D))\\
&\geq2^{-n(k-1)(H(r_{AB})-\nu_k(\eps,n))}.
\end{align}
\end{proof}
\end{section}
\emph{Acknowledgements.} Janis N\"otzel wants to thank Gisbert Jan\ss en for pointing out to him the importance of the problem, Igor Bjelakovic for his encouragement and Matthias Christandl for a helpful discussion. Many thanks go to Igor Bjelakovic, Holger Boche and Gisbert Janssen for weekly discussions about representation theory.\\ This work was supported by the BMBF via grant 01BQ1050, by the DFG via grant NO 1129/1-1 and also by the ERC Advanced Grant IRQUAT, the Spanish MINECO Project No. FIS2013-40627-P  and the Generalitat de Catalunya CIRIT Project No. 2014 SGR 966.

\end{document}